\newtheorem{theorem}{Theorem}
\newtheorem{remark}{Remark}
\begin{document}

\title
{
Neural Estimation of the Information Bottleneck \\ Based on a Mapping Approach
\thanks{The first three authors contributed equally to this work and $\dag$ marked the corresponding author. The authors would like to thank Prof. Hao Wu for his valuable discussions and helpful advice. This work was partially supported by National Key Research and Development Program of China (2018YFA0701603) and National Natural Science Foundation of China (12271289 and 62231022).}
}


\author[1]{Lingyi Chen}
\author[1]{Shitong Wu}
\author[3]{Sicheng Xu}
\author[4$\dag$]{Wenyi Zhang}
\author[2]{Huihui Wu}

\affil[1]{Department of Mathematical Sciences, Tsinghua University, Beijing 100084, China}
\affil[2]{Yangtze Delta Region Institute (Huzhou), 
\authorcr University of Electronic Science and Technology of China, 
Huzhou, Zhejiang, 313000, China. 
} 
\affil[3]{Department of Mathematical Sciences,
\authorcr University of Science and Technology of China, Hefei, Anhui 230027, China 
}
\affil[4]{Department of Electronic Engineering and Information Science, 
\authorcr University of Science and Technology of China, Hefei, Anhui 230027, China 
\authorcr Email: wenyizha@ustc.edu.cn 
}

\maketitle

\begin{abstract}
The information bottleneck (IB) method is a technique designed to extract meaningful information related to one random variable from another random variable, and has found extensive applications in machine learning problems. 
In this paper, neural network based estimation of the IB problem solution is studied, through the lens of a novel formulation of the IB problem. 
Via exploiting the inherent structure of the IB functional and leveraging the mapping approach, the proposed formulation of the IB problem involves only a single variable to be optimized, and subsequently is readily amenable to data-driven estimators based on neural networks. 
A theoretical analysis is conducted to guarantee that the neural estimator asymptotically solves the IB problem, and the numerical experiments on both synthetic and MNIST datasets demonstrate the effectiveness of the neural estimator. 
\end{abstract}
\begin{IEEEkeywords}
Information bottleneck, mapping approach, neural network estimation. 
\end{IEEEkeywords}

\section{Introduction}

The information bottleneck (IB) method, first proposed by Tishby et al. in 1999 \cite{tishby2000information}, provides an information theoretic framework on characterizing the best trade-off between information compression and extraction.
The main goal of IB is to extract information of a target variable $Y$ from another observable variable $X$ given the joint distribution.
Due to its capability to preserve meaningful information while   compressing representations, it has been applied to various fields, such as clustering \cite{slonim2000document},  source coding \cite{courtade2011multiterminal} and supervised learning \cite{alemi2016deep,tishby2015deep}, and so on. 
In recent years, the IB method has gained prominence in neural network research.
For example, work  \cite{alemi2016deep} and \cite{achille2018information} apply IB to an encoder-decoder structure for image classification. 
Furthermore, other studies (e.g., \cite{tishby2015deep,saxe2019information}) have used the IB principle to theoretically analyze deep neural networks.
Applying IB to neural networks offers benefits, especially for high-dimensional datasets, as it enhances data representation extraction.
However, existing neural network methods to solve the IB problems, represented by the variational information bottleneck (VIB) \cite{alemi2016deep} etc., have certain limitations, due to the fact that they adopt the classical problem formulation which involves subtracting two mutual information terms. 
To avoid the difficulty of directly computing mutual information\cite{belghazi2018mutual}, VIB relaxed the original IB formulation and turned to optimize a variational bound of the IB problem. 
As a cost, it fails to give a tight enough estimation of the IB problem \cite{kolchinsky2018caveats}. 
Therefore, it is demanding to develop an equivalent and neural network-friendly formulation for the IB problem. 
In this paper, we propose a novel formulation for the IB problem that can be parameterized by neural networks without any relaxations. 
We approach the problem from a mapping perspective.
The mapping approach, as suggested by Rose \cite{rose1994mapping}, is a powerful technique to transform the problem of optimizing a probability measure into the problem of finding an optimal mapping, and it has been successfully applied to the computation and analysis of rate-distortion functions. 
Specifically, we first obtain the dual form of the IB problem, and then transform the optimization problem of one probability measure into an optimal mapping search problem \cite{rose1994mapping}. 
Since the dual form possesses two variable probability measures, the application of mapping approach into IB problem is more involved than the rate-distortion case. 
Remarkably, we discover that when extending the mapping approach to the IB problem, the two variables of the dual form can be combined into one variable. 
As a result, the final univariate formulation is obtained. 
Based on the proposed formulation, we designed a neural estimation algorithm. 
Specifically, the variable is represented by neural networks, and the optimization objective is computed as the loss function for gradient descent. 
Moreover, an analysis of the neural estimation algorithm is conducted, and it reveals that the solution obtained by the neural estimation converges to the theoretical solution of the original IB problem, when the sample size is sufficiently large. 
Simulation results demonstrate that in the cases of the low-dimensional sources, the proposed method is effective and accurate compared with theoretical results. 
As for the cases of high-dimensional dataset, such as the MNIST dataset \cite{lei2023neural}, the proposed method provides a tighter estimation than the VIB approach. 
The remaining of the article is organized as follows. 
In Section \uppercase\expandafter{\romannumeral2}, a brief review of the background knowledge related to the IB problem is provided.
Next, we derive the equivalent formulation of the IB problem using mapping approach in Section \uppercase\expandafter{\romannumeral3}.
Section \uppercase\expandafter{\romannumeral4} presents a neural estimation algorithm based on the proposed formulation, along with an analysis of its asymptotic properties.
In Section \uppercase\expandafter{\romannumeral5}, experiments are conducted to evaluate the effectiveness of the algorithm.
Finally, the paper concludes in Section \uppercase\expandafter{\romannumeral6}.

\section{Preliminaries}
The information bottleneck (IB) is a method to extract information of a target variable $Y$ from another observable variable $X$.
The extracted information is represented by a bottleneck variable $T$, which forms a Markov chain $Y \!\leftrightarrow\! X \!\leftrightarrow\! T$.
Specifically, given the joint distribution $P_{XY}$, the IB problem is defined by
\begin{equation}
    \begin{aligned}
        R(I)&=\min_{P_{T|X}} I(X;T)\\
        & s.t. \quad I(Y;T)\geq I,
    \end{aligned}
    \label{IB}
\end{equation}
where the mutual information $I(X; T)$ reflects the effectiveness of compressing $X$, and the mutual information $I(Y; T)$ signifies how well $T$ captures the essence of $Y$. 
Consider that the variables $X,T,Y$ take values from alphabets $\mathcal{X},\mathcal{T},\mathcal{Y}$ with probability densities $p(x),p(t),p(y)$ respectively, and the conditional probability densities $p(t|x),p(y|x)$ etc. are defined similarly. 
The expression of $I(Y;T)$ is 
$$
I(Y;T)\!=\!\!\int_{\mathcal{X}}\!dx\,p(x)\int_\mathcal{T}\! dt\,p(t| x)\int_\mathcal{Y} \!dy\,p(y| x)\log\frac{p(y| t)}{p(y)},
$$
where the variable $p(y|t)$ is defined by the variable $p(t|x)$, i.e., $p(y| t)=\frac{\int_{\mathcal{X}}dx\,p(x)p(y| x)p(t| x)}{\int_{\mathcal{X}}dx\,p(x)p(t| x)}$. 
It is obvious that by substituting $p(y|t)$ into the expression of $I(Y;T)$, the IB problem \eqref{IB} will involve  a complex combination of integral terms, making the computation difficult \cite{belghazi2018mutual}. 
%

%
To address this problem, work \cite{tishby2000information} relaxed  $p(t),p(y|t)$ to independent densities $q(t)$ and $r(y|t)$ respectively, and proposed the following triple minimization formulation of IB: 
\begin{equation}
    \begin{aligned}
        \min_{p(t| x)}&\min_{q(t)}\min_{r(y| t)}\mathcal{L}(p(t| x),q(t),r(y| t))= \\
        \int_\mathcal{X} & dx\,p(x)\int_\mathcal{T} dt\,p(t| x)\log \left[{p(t| x)}/{q(t)}\right] \\
        -\beta\!\!\int_\mathcal{X}& dx\,p(x)\int_\mathcal{T} dt\,p(t| x)\int_\mathcal{Y} dy\,p(y| x)\log r(y| t). 
    \end{aligned}
    \label{tri}
\end{equation}
This formulation is equivalent to the IB problem \eqref{IB} but without the  distribution constraints. 
The multiplier $\beta$ corresponds to the slope of the IB curve \cite{zaidi2020ib_view} and BA algorithm of the IB problem is derived based on this formulation \cite{tishby2000information}. 
For the IB problem, the existing methods that adopt neural networks are usually designed based on the above formulation \cite{zhai2022adversarial,alemi2016deep,kolchinsky2019nonlinear_ib}. 
%
%
However, the usage of this formulation \eqref{tri} may lead to some unnecessary relaxations of the IB problem when applying neural networks, since it involves subtraction of two mutual information terms that is difficult to be directly represented by neural networks without variational relaxations. 
Hence, these methods usually fail to provide accurate enough solutions to the IB problems.  
To conquer this difficulty, this paper reformulates the IB problem \eqref{IB} from the perspective of mapping, as described in the next section. 
%

\section{A Mapping Approach to the Information Bottleneck Model}

This section derives a novel formulation equivalent to the original IB problem, by reformulating the triple minimization problem \eqref{tri} via the mapping approach. 
The advantage of the proposed formulation is that it only processes one variable without complicated operations, which is suitable for neural representation. 
%
Since mapping approach (MA) is the main technique, we name the novel formulation MA-IB model. 
The definition of the MA-IB model is presented in Theorem 1. 

%
\begin{theorem}[MA-IB Model]
Suppose $Z\in\mathcal{Z}$ is a given random variable with probability measure $P_Z$, and $r(y|z)$ represents the conditional probability density from $Z$ to $Y$. 
Then the following formulation \eqref{vmaib} shares the same optimal values with the triple minimization problem \eqref{tri}: 
\begin{equation}
  \begin{aligned}
        &\min_{r(y| z)}G(r(y| z))=\\
        -&\!\!\int_\mathcal{X} \!\!dx\, p(x)\!\log\left(\!\int_\mathcal{Z} \!\!dz\,p(z)\,e^{\beta \int_\mathcal{Y} dy\,p(y| x)\log r(y| z)}\!\right).
    \end{aligned}
\label{vmaib}
\end{equation}
\label{thm1}
\end{theorem}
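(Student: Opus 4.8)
The plan is to establish Theorem~\ref{thm1} in two stages: first I would eliminate the kernel $p(t|x)$ from the triple minimization \eqref{tri} by an explicit, pointwise-in-$x$ Gibbs--variational minimization, leaving a functional of the pair $(q(t),r(y|t))$; then I would show that minimizing this functional over the pair is the same as minimizing $G$ over the single density $r(y|z)$, the reduction being precisely the mapping/change-of-variable step that fuses $q(t)$ and $r(y|t)$ into one object.

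For the first stage, abbreviate $d(x,t):=\int_{\mathcal Y}dy\,p(y|x)\log r(y|t)$ so that $\mathcal L(p(t|x),q(t),r(y|t))=\int_{\mathcal X}dx\,p(x)\,g_x(p(\cdot|x))$ with $g_x(p)=\int_{\mathcal T}dt\,p(t)\log\!\big(p(t)/(q(t)e^{\beta d(x,t)})\big)$. For fixed $x$, writing $W(x):=\int_{\mathcal T}dt\,q(t)e^{\beta d(x,t)}$ and completing to a probability density, $g_x(p)=D\!\big(p\,\|\,q(\cdot)e^{\beta d(x,\cdot)}/W(x)\big)-\log W(x)\ge -\log W(x)$, with equality at the Boltzmann kernel $p^\star(t|x)=q(t)e^{\beta d(x,t)}/W(x)$ (a bona fide conditional density). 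Integrating against $p(x)$ gives
\[
\min_{p(t|x)}\mathcal L=-\int_{\mathcal X}dx\,p(x)\log\!\Big(\int_{\mathcal T}dt\,q(t)\,e^{\beta\int_{\mathcal Y}dy\,p(y|x)\log r(y|t)}\Big)=:F(q,r),
\]
so the optimal value of \eqref{tri} equals $\inf_{q(t),\,r(y|t)}F(q,r)$, which up to renaming $\mathcal T\to\mathcal Z$, $q\to p(z)$ is exactly the functional in \eqref{vmaib} but with $q$ left free rather than fixed at $P_Z$.

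For the second stage I would prove $\inf_{q,r}F(q,r)=\inf_{r(y|z)}G(r(y|z))$ by two inequalities. The ``$\le$'' direction is immediate: choosing $\mathcal T=\mathcal Z$, $q=P_Z$ and $r(y|t)=r(y|z)$ is feasible and gives $F(P_Z,r)=G(r)$, so $\inf_{q,r}F\le\inf_r G$. For ``$\ge$'', fix any $(q(t),r(y|t))$; under the (standard, for the mapping approach) provision that $P_Z$ is atomless on a standard Borel $\mathcal Z$ --- e.g.\ $\mathcal Z=\mathbb R^d$ with a Gaussian $P_Z$ --- there exists a measurable $\phi:\mathcal Z\to\mathcal T$ with $\phi_\#P_Z=q$. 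Putting $\tilde r(y|z):=r(y|\phi(z))$ and applying the pushforward/change-of-variable identity inside each $x$-integral,
\[
\int_{\mathcal Z}dz\,p(z)\,e^{\beta\int_{\mathcal Y}dy\,p(y|x)\log\tilde r(y|z)}=\int_{\mathcal T}dt\,q(t)\,e^{\beta\int_{\mathcal Y}dy\,p(y|x)\log r(y|t)},
\]
hence $G(\tilde r)=F(q,r)$ and $\inf_r G\le\inf_{q,r}F$. Combining the two inequalities with the first stage finishes the proof. Note that only equality of infima is claimed, so I need not worry about attainment of $r(y|z)$.

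The step I expect to be the real obstacle is the measure-transport in the second stage: guaranteeing a measurable $\phi$ with $\phi_\#P_Z=q$ for an \emph{arbitrary} target $q$ forces an assumption on $(\mathcal Z,P_Z)$ (inverse-CDF construction when $\mathcal T\subseteq\mathbb R$; Borel isomorphism in general), and one must carry the measurability/integrability bookkeeping through the change of variables --- in particular, treating $r(y|t)$ that vanish on a set of positive $p(y|x)$-mass (where the exponent is $-\infty$ and the integrand $0$) and ensuring $0<W(x)<\infty$ for $p$-a.e.\ $x$ so that $\log W(x)$ is meaningful. The remaining ingredients --- non-negativity of the KL divergence, the Gibbs identity, and Fubini --- are routine.
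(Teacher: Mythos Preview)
Your proposal is correct and follows essentially the same two-stage strategy as the paper: first eliminate $p(t|x)$ via the Gibbs/Boltzmann optimality condition to reach the dual form $F(q,r)$, then use a measurable map $\phi$ with $\phi_\#P_Z=q$ (the paper invokes the Borel isomorphism theorem for this) to collapse the pair $(q,r)$ into the single variable $r(y|z)=r(y|\phi(z))$. Your explicit $\le$/$\ge$ argument for the second stage and your flagged caveats on atomlessness and integrability are more carefully stated than the paper's somewhat informal ``variable substitution'' justification, but the underlying argument is the same.
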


\begin{proof}
According to \cite{tishby2000information}, the following optimal condition can be obtained through variational calculations, 
\begin{equation} 
    p(t| x)=\frac{q(t)e^{\beta\!\int_\mathcal{Y}\!dy\,p(y| x)\log\! r(y| t)}}{A(x)},
    \label{ptx}
\end{equation}
where the constant $A(x)=\int_\mathcal{T} dt\,q(t)e^{\beta\int_\mathcal{Y}dy\,p(y| x)\log r(y| t)}$ is the normalization factor \cite{tishby2000information} to guarantee $\int_\mathcal{T}dt\,p(t| x)=1$.
By substituting \eqref{ptx} into the formulation \eqref{tri}, one obtains another equivalent formulation of the IB problem:
\begin{equation}
    \begin{aligned}
        &\min_{q(t)}\!\min_{r(y| t)} F(q(t),r(y| t))=\\
        -&\int_\mathcal{X} \!\!dx\, p(x)\log\left(\int_\mathcal{T} \!\!dt\,q(t)e^{\beta \!\int_\mathcal{Y} \!dy\,p(y| x)\log r(y| t)}\right),\\
    \end{aligned}
    \label{dual}
\end{equation}
which consists of two variables and can be viewed as the dual form\footnote{We adopt this designation, due to its similarity with the rate function duality \cite[Sec. 2]{dembo2002source}, \cite[Lemma 2]{lei2023neural} in the rate-distortion theory.} of the IB problem.

%

%
Then, we complete the proof from the perspective of mapping. 
We state that given continuous random variables $T$ and $Z$, with their probability measures $Q_T$ and $P_Z$, there exists a mapping $\varphi:\mathcal{Z}\rightarrow\mathcal{T}$, such that $T=\varphi(Z)$ and $Q_T=\varphi_{\#}P_Z$, where $\varphi_{\#}$ is the {push-forward} operator defined by
\begin{equation}
    \forall h\in\mathcal{C}(\mathcal{T}) , \int_\mathcal{T}dQ_T\,h(t)=\int_{\mathcal{Z}}dP_Z\,h(\varphi(z)),
\end{equation}
and $\mathcal{C}(\mathcal{T})$ refers to the set of continuous functions on $\mathcal{T}$.

The existence of the mapping is guaranteed by the Borel isomorphism theorem \cite{royden1968real,gray2009probability}, i.e., a complete separable metric space with a finite Borel measure is isomorphic to the unit interval with Lebesgue measure. 
Since the measure spaces $(\mathcal{T},Q_T)$ and $(\mathcal{Z},P_Z)$ are both isomorphic to the unit interval with Lebesgue measure, these two spaces are isomorphic to each other. Therefore, the above statement is valid.

We now establish a one-to-one correspondence between probability distribution and mapping. 
For the dual form \eqref{dual}, we transform the problem of finding an optimal distribution $q(t)$ into searching an optimal mapping $\varphi$. 
As a result, we turn to minimize the functional 
\begin{equation*}
    \begin{aligned}
        &\min_{\varphi}\!\min_{r(y| t)} G(\varphi,r(y| t))=\\
        - &\!\!\int_\mathcal{X}\!\! dx\,p(x)\log\left(\int_{\mathcal{Z}}\!\!dz\,p(z)e^{\beta \int_\mathcal{Y} \!dy\,p(y| x)\log r(y| \varphi(z))}\right).\\
    \end{aligned}
\end{equation*}

It is noticed that the two variables $\varphi$ and $r(y|t)$ exist in the functional $G$ in the composite form of $r(y|\varphi(z))$.
Considering the fact that $\varphi$ is a mapping from $\mathcal{Z}$ to $\mathcal{T}$, the expression $r(y|\varphi(z))$ is equal to the conditional probability density from $Z$ to $Y$, i.e., $r(y|z)$. 
By substituting $r(y|\varphi(z))=r(y|z)$, the functional $G$ depends solely on the variable $r(y|z)$.
To this end, the formulation \eqref{vmaib} is obtained. 
%
Since the above derivation only involves variable substitutions, the optimization problems \eqref{tri} and \eqref{vmaib} have equal optimal values.
\end{proof}

\begin{remark} 
The essence of mapping approach is the Borel isomorphism theorem \cite{royden1968real}, that a complete separable metric space with a finite Borel measure is isomorphic to the unit interval with Lebesgue measure. 
The isomorphism shows that the corresponding point-to-point mapping exists and reveals the correspondence between the problem of searching the optimized probability measure and the problem of finding optimal mapping from two spaces with measures. 
%
\end{remark}

\begin{remark}
The proof process reveals that the relation between the formulations \eqref{tri} and \eqref{vmaib} is not only about the equality of optimal values.
The relationship of their optimal solution is given by the mapping. 
Supposing $T$ is the variable generating $q(t)$ and $r(y|t)$ that minimizes \eqref{tri} and $Z$ optimizes \eqref{vmaib}, there exists a mapping $\varphi:\mathcal{Z}\rightarrow\mathcal{T}$ such that $T=\varphi(Z)$.
\end{remark}

\section{Neural Estimation For High-Dimensional Information Bottleneck Problems}

This section parameterizes the proposed MA-IB model using neural networks.
The MA-IB model possesses simpler formulations than existing methods, with less variables suitable for neural network parameterization, and without any relaxation capable for accurate estimation of the IB problem. 
We perform neural estimation on the MA-IB model similar to the methods proposed in \cite{lei2023neural}. 
Due to different optimization objectives, we parameterize the variable as a classifier instead of a generator as in \cite{lei2023neural}.
The proposed method is referred to as the neural MA-IB algorithm, which is notably applicable in high-dimensional scenarios.
Additionally, the asymptotic property of the proposed neural MA-IB algorithm is analyzed. 

\subsection{Neural Estimation }

The functional $G$ in \eqref{vmaib} can be rewritten in the expectation form as 
\begin{equation*}
G(r(y| z))=
            -E_{P_X} \!\!\left[\log\left(E_{P_Z}\!\!\left[e^{\beta E_{P_{Y|X}}[\log r(Y| Z)]}\right]\right)\right].
\end{equation*}

The expectation of high-dimensional distributions is intractable, and the Monte Carlo approximation is commonly used. 
%
Let $x_1,x_2,\ldots,x_n$ and $z_1,z_2,\ldots,z_m$ be samples i.i.d. from $P_X$ and $P_Z$ respectively, 
where $P_X$ is from a given dataset and $P_Z$ is a base distribution such as the high-dimensional standard Gaussian distribution.
{For each sample point $x_i$, we draw $y_{1,i},y_{2,i},\ldots,y_{l,i}$ i.i.d. from $P_{Y|X=x_i}$.
We can approximate $P_X$, $P_Z$ and $P_{Y|X=x_i}$ with the empirical data distribution. 
As a consequence, the expectation form can be transformed to
\begin{equation*}
\tilde{G}(r(y| z))=
    -\frac{1}{n}\sum_{i=1}^n\log\left(\frac{1}{m}\sum_{j=1}^me^{\beta\left(\frac{1}{l}\sum_{k=1}^l\log r(y_{k,i}| z_j)\right)}\right).
\end{equation*}
%
}

Next, we parameterize the variable $r(y| z)$ as  $r_\theta(y| z)$ with a network consisting of a multi-layer perception (MLP) followed by a softmax layer. 
%
%
%
{Substituting $r(y|z)$ by $r_\theta(y|z)$, one obtains the loss function
\begin{equation}
    F(\theta)=-\frac{1}{n}\sum_{i=1}^n\log\left(\frac{1}{m}\sum_{j=1}^me^{-\beta \kappa_{ij}(\theta)}\right),
    \label{loss}
\end{equation}
where $\kappa_{ij}(\theta)=-\frac{1}{l}\sum_{k=1}^l\log r_{\theta}(y_{k,i}| z_j)$.
}

In practice, we optimize the loss function by stochastic gradient descent. 
Algorithm 1 summarizes the proposed neural MA-IB estimation method. 

%
\begin{algorithm}[ht]
\renewcommand{\algorithmicrequire}{\textbf{Input:}}
\renewcommand{\algorithmicensure}{\textbf{Output:}}
\renewcommand{\algorithmicreturn}{\textbf{Return:}}
\caption{Neural MA-IB}\label{alg:NEIB}
\begin{algorithmic}
    \REQUIRE sample sizes $m$, $n$ and $l$, learning rate $\eta$, $max\_iter$
    \STATE initialize the parameters $\theta$ of the network $r_\theta(y| z)$
    \FOR{$r=1:max\_iter$}
        \STATE sample $\{x_i\}^n_{i=1}$  i.i.d. from $P_X$ 
        \STATE sample $\{z_j\}^m_{j=1}$  i.i.d. from $P_Z$
        \STATE sample $\{y_{k,i}\}^l_{i=1}$  i.i.d. from $P_{Y|X=x_i}$
        \STATE calculate $\kappa_{ij}(\theta)=-\frac{1}{l}\sum_{k=1}^l\log r_{\theta}(y_{k,i}| z_j)$ 
        \STATE calculate $F(\theta)=-\frac{1}{n}\sum_{i=1}^n\log\left(\frac{1}{m}\sum_{j=1}^me^{-\beta \kappa_{ij}(\theta)}\right)$
        \STATE update $\theta$ by gradient descent $\theta\leftarrow \theta-\eta \nabla_\theta F(\theta)$
    \ENDFOR
    \RETURN $r_\theta(y| z)$
\end{algorithmic}
\end{algorithm}

\subsection{Asymptotic Analysis}

This subsection discusses the relation  between the optimal value of the loss function \eqref{loss} and that of the optimization problem \eqref{vmaib}, which is precisely defined by the following theorem\footnote{In work \cite{belghazi2018mutual} and \cite{lei2023neural}, this property is also called strong consistency.}.

\begin{theorem}
Suppose $G$ is the optimal value of \eqref{vmaib}, and then there exists a compact parameter domain $\Theta$, such that the optimal value of $F(\theta)$, i.e.
\begin{equation*}
    \widehat{G}_{mnl}=\inf_{\theta\in \Theta}F(\theta),
\end{equation*}
 satisfies that $\widehat{G}_{mnl}$ converges to $G$ with probability one.
Specifically, for all $\epsilon>0$, there exist positive  integers $M,N,L$, such that $\forall n>N,m>M,l>L$,
\begin{equation}
    |G-\widehat{G}_{mnl}|<\varepsilon,\quad a.e.
\label{ae}
\end{equation}
where “a.e.” stands for almost everywhere.
\end{theorem}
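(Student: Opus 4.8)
The plan is to prove the convergence $\widehat{G}_{mnl}\to G$ by splitting the deviation into two parts and controlling each: a \emph{uniform Monte Carlo error} between the sample objective $F(\theta)$ (equivalently $\tilde G$ evaluated at the parameterized class) and its population counterpart $G(r_\theta)$, and an \emph{approximation error} between $\inf_{\theta\in\Theta} G(r_\theta)$ and the true optimum $G$ over all admissible conditional densities $r(y|z)$. Formally, I would write
\begin{equation*}
  |G-\widehat G_{mnl}|\le \Bigl|\inf_{\theta\in\Theta}G(r_\theta)-G\Bigr| + \sup_{\theta\in\Theta}\bigl|F(\theta)-G(r_\theta)\bigr|,
\end{equation*}
and argue that, given $\epsilon>0$, the parameter domain $\Theta$ can be chosen (finite but large enough network) so the first term is below $\epsilon/2$, while the second term is below $\epsilon/2$ almost surely for all $m,n,l$ large by a uniform law of large numbers.

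First I would fix the setting: take $\mathcal{Y}$ finite (as in the MNIST classification experiments, where $r_\theta$ ends in a softmax over labels), and restrict $\Theta$ to a compact set on which the softmax outputs are bounded away from $0$ and $1$, so that $\log r_\theta(y|z)$ is uniformly bounded and Lipschitz in $\theta$ uniformly over $(y,z)$ on the (compact or sub-Gaussian) support of $Z$ and the data support of $X$. This boundedness is what makes the nested expectations $E_{P_{Y|X}}[\log r_\theta(Y|Z)]$, the exponential $e^{\beta(\cdot)}$, and the outer $\log$ all continuous and bounded functions of $\theta$, which is needed to invoke a uniform strong law. For the approximation term, I would appeal to a universal approximation argument: since $G$ is (by Theorem~1) the optimum of the dual/MA-IB functional, and since the map $r\mapsto G(r)$ is continuous in a suitable sense, a sufficiently wide MLP-plus-softmax can approximate a near-optimal $r^\star(y|z)$ uniformly, forcing $\inf_{\theta\in\Theta}G(r_\theta)$ within $\epsilon/2$ of $G$ once $\Theta$ is taken large enough; this is the step that \emph{defines} the compact $\Theta$ asserted in the statement.

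For the Monte Carlo term I would proceed in three nested stages corresponding to the three sample sizes $l$, $m$, $n$. For each fixed $i$, the inner average $\frac1l\sum_k \log r_\theta(y_{k,i}|z_j) \to E_{P_{Y|X=x_i}}[\log r_\theta(Y|z_j)]$ uniformly in $\theta$ and $z_j$ by a uniform SLLN over the compact class $\{\log r_\theta\}$ (Glivenko--Cantelli / bracketing, using Lipschitzness in $\theta$ and compactness of $\Theta$); continuity of $t\mapsto e^{\beta t}$ then transfers this through the exponential. Next, the outer average $\frac1m\sum_j e^{-\beta\kappa_{ij}(\theta)} \to E_{P_Z}[e^{\beta E_{P_{Y|X=x_i}}[\log r_\theta(Y|Z)]}]$ uniformly in $\theta$, again by a uniform SLLN since the integrand is bounded and equicontinuous in $\theta$; continuity of $\log$ (bounded away from $0$ because the exponent is bounded) transfers this to $-\log(\cdot)$. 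Finally the outermost average over $i=1,\dots,n$ converges uniformly in $\theta$ to $E_{P_X}[\dots] = G(r_\theta)$. Combining the three uniform convergences (each holding on a probability-one event, and their intersection still probability one) gives $\sup_\theta|F(\theta)-G(r_\theta)|<\epsilon/2$ a.e. for $m,n,l$ beyond thresholds $M,N,L$, and then an $\inf$-vs-$\inf$ inequality ($|\inf_\theta a(\theta)-\inf_\theta b(\theta)|\le\sup_\theta|a(\theta)-b(\theta)|$) finishes the bound.

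\textbf{Main obstacle.} The delicate point is the \emph{uniform} control over the infinite parameter class $\Theta$ simultaneously with the three-level nesting: an ordinary SLLN only gives pointwise (in $\theta$) convergence, so I need a genuine uniform law, which requires establishing equicontinuity/Lipschitz bounds for $\log r_\theta$ in $\theta$ that are uniform over the (possibly unbounded, Gaussian) support of $Z$ — hence the need to either truncate the $Z$-support or exploit sub-Gaussian tail bounds together with the boundedness enforced by the softmax. A secondary subtlety is interchanging the order in which $l,m,n\to\infty$: because the errors compose nonlinearly through $e^{\beta(\cdot)}$ and $\log(\cdot)$, one must verify the composition is stable, which again reduces to the uniform boundedness of the exponent away from $\pm\infty$ — the reason compactness of $\Theta$ and the bounded-away-from-$\{0,1\}$ softmax assumption are doing the real work.
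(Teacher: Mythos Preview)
Your proposal is correct and follows essentially the same two-part decomposition as the paper: an approximation error $|G-\inf_{\theta\in\Theta}G(r_\theta)|$ controlled via universal approximation, and a sampling error $\sup_{\theta\in\Theta}|F(\theta)-G(r_\theta)|$ controlled via a (uniform) law of large numbers on the compact $\Theta$, combined through the inequality $|\inf a-\inf b|\le\sup|a-b|$. If anything, you are more explicit than the paper about the need for \emph{uniformity} in the SLLN and about the three-level nesting in $l,m,n$, which the paper handles in a single line by invoking ``the law of large numbers'' on the compact class.
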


The proof of this theorem includes two parts. It involves an approximation from the neural network space to the entire function space, and an approximation from the empirical distribution to the true distribution. 
These two parts are guaranteed by the universal approximation theorem \cite{hornik1989multilayer,kidger2020universal} and the law of large numbers \cite{gray2009probability}. 
%
\begin{proof}
    Fix $\varepsilon>0$. Let $r^*(y| z)$ be the optimal solution that minimizes \eqref{vmaib}, and then $G$ satisfies
    \begin{equation*}
            G=
            -E_{P_X}\!\!\left[\log\left(E_{P_Z}\!\!\left[e^{\beta E_{P_{Y|X}}[\log r^*(Y| Z)]}\right]\right)\right].
    \end{equation*}
    Next, define 
    \begin{equation*}
        G_{\theta}=-E_{P_X} \!\!\left[\log\left(E_{P_Z}\!\!\left[e^{\beta E_{P_{Y|X}}[\log r_\theta(Y| Z)]}\right]\right)\right].
    \end{equation*}
    
    Fix $\xi>0$. By the universal approximation theorem \cite{hornik1989multilayer,kidger2020universal}, there exist $\Theta$ and $\hat{\theta}\in\Theta$, such that $\forall y\in\mathcal{Y},z\in\mathcal{Z}$,
    $$
    |\log r^*(y| z)-\log r_{\hat{\theta}}(y| z)|<\xi.
    $$
    Then $\text{for } \forall\,x\in\mathcal{X}\text{ and }z\in\mathcal{Z}$, it holds that
    \begin{equation*}
    \begin{aligned}
        &\left|\beta E_{P_{Y|X=x}}[\log r^*(Y| z)]-\beta E_{P_{Y|X=x}}[\log r_\theta(Y| z)]\right|\\
        &\leq\beta E_{P_{Y|X=x}}\left|\log r^*(Y| z)-\log r_{\hat{\theta}}(Y| z)\right|<\beta\xi.
    \end{aligned}
    \end{equation*}
    
    
    Considering that $e^x$ is Lipschitz continuous on the interval $(-\infty,0]$ with constant 1 and $\log x$ is a continuous function, there exists a positive number $\xi$, such that the above relation yields
    \begin{equation*}
        \left|\log \frac{E_{P_Z}\!\!\left(e^{\beta E_{P_{Y|X=x}}[\log r^*(Y| Z)]}\right)}{E_{P_Z}\!\!\left(e^{\beta E_{P_{Y|X=x}}[\log r_\theta(Y| Z)]}\right)}\right|<\frac{\varepsilon}{2}.
    \end{equation*}

    By choosing a suitable $\hat{\theta}$, one has $|G-G_{\hat{\theta}}|<\frac{\varepsilon}{2}$.
    Next, define
    \begin{equation*}
        G_{\Theta}=\inf_{\theta\in\Theta}G_\theta.
    \end{equation*}
    From this definition, one obtains $G\leq G_{\Theta} \leq G_{\hat{\theta}}$, and then
    \begin{equation}
        |G-G_{\Theta}|<\frac{\varepsilon}{2}.
        \label{ie1}
    \end{equation}
    %
    
    Further, one has 
    $$
    |\widehat{G}_{mnl}-G_{\Theta}|
    =|\inf_{\theta\in\Theta}F(\theta)-\inf_{\theta\in\Theta}G_\theta|
    \leq\sup_{\theta\in\Theta}|F(\theta)-G_\theta|.$$
    Since $\Theta$ is compact and the neural networks are continuous, 
    $$
    f_\theta(X,Z):=\log r_\theta(y| Z)
    $$ 
    satisfies the law of large numbers \cite{gray2009probability}. 
    Therefore, for any given $\epsilon>0$, there exist $N$,$M$ and $L$, such that 
    $\forall n>N,m>M,l>L$ and with probability one, $|F(\theta)-G_\theta|\leq\frac{\varepsilon}{2}$.
    Therefore
    \begin{equation}
        |\widehat{G}_{mnl}-G_{\Theta}|<\frac{\varepsilon}{2}.
        \label{ie2}
    \end{equation}
    Combining \eqref{ie1} and \eqref{ie2}, one finally has
    \begin{equation*}
        |G-\widehat{G}_{mnl}|<\varepsilon,\,\forall\,n>N,m>M,l>L.
    \end{equation*}
    The proof of the theorem has completed here. 
\end{proof}

\section{Experimental Results and Discussions}
This section evaluates the effectiveness of the proposed neural MA-IB algorithm, by examining over both low-dimensional cases and high-dimensional MNIST dataset. 

\subsection{Low-Dimensional Examples}

This subsection computes IB curves of two low-dimensional models: a toy model and the joint Gaussian model.

For the toy model, we set the joint distribution of $X$ and $Y$ as
$$
\left[p(x_i,y_j)\right]=\frac{1}{90}
\left[\begin{array}{cccc}
     1 & 2 & \cdots & 9 \\
     9 & 8 & \cdots & 1 \\
\end{array}\right]^T.
$$

For the joint Gaussian model, we set $Y = X+E$, where $X$ and $E$ independently follow the  standard Gaussian distribution $\mathcal{N}(0,1)$. 
Since the joint Gaussian distribution is continuous, we should discretize $Y$ before applying the neural MA-IB algorithm. 
Truncate $Y$ into an interval $[-M,M]$ and divide the interval into $N$ equal parts. Then record the midpoints of these pieces as $y_j=-M+\delta (j-1/2),\,j=1,\dots,N$, where $\delta = 2M/N$. 
For each sampling point $x_i$, the conditional probability is
$$
    p(y_j| x_i)=\frac{1}{\sqrt{2\pi}}e^{-(y_j-x_i)^2/2},\,j=1,\ldots,N,
$$
and the discretized distribution is $\hat{p}(y_j| x_i)=p(y_j| x_i)\delta$.
In this paper, $M=10$ and $N=100$ are employed.

\vspace{-.1in}
\begin{figure}[H]
    \centering
    \includegraphics[width=0.485\linewidth]{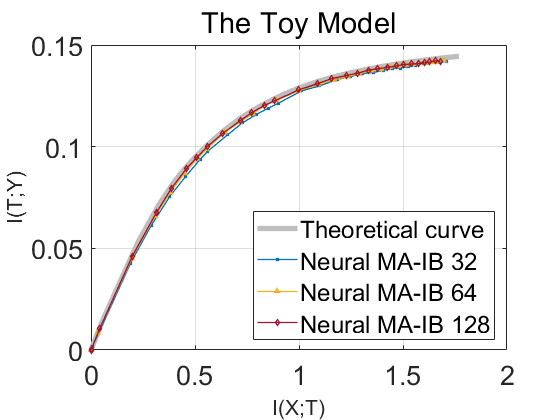}
    \label{toy}
    \includegraphics[width=0.485\linewidth]{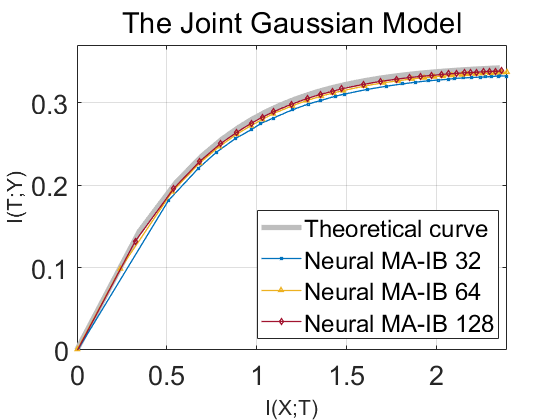}
    \label{gas}
    \caption{Comparison of the theoretical result and the neural MA-IB algorithm with different sample sizes on the toy model (left) and the joint Gaussian model (right). The grey line is the theoretical curve and the blue, yellow, and red lines are obtained by neural MA-IB algorithm with sample sizes of 32, 64, and 128, respectively.}
    \label{fig:1} 
\end{figure}
\vspace{-.1in}

Fig. \ref{fig:1} plots the theoretical IB curve drawn by the BA algorithm and the curves produced by the proposed neural MA-IB algorithm, for the toy model (left) and joint Gaussian model (right). 
It is observed that the curves move closer to the theoretical curve with increasing sample sizes, which demonstrates the effectiveness of neural MA-IB algorithm and its asymptotic property.
%

\subsection{Application in High-Dimensional Dataset}
This subsection applies the neural MA-IB algorithm into the real-world MNIST dataset \cite{lecun2010mnist}. 
The MNIST dataset is a digit classification dataset, where $X$ represents $28\times28$ grayscale image data, and $Y$ corresponds to the labels. 
Due to the dimension of $X$ being as high as 768, the corresponding IB problem is high-dimensional and cannot be computed by numerical algorithms.

As MNIST is a classification dataset, it holds that $p(y| x_i)=\delta_{y_i}$, and thus, $Y$ is a deterministic function of $X$. 
In this case, the authors of \cite{tishby2003tradeoff} proved that the corresponding IB curve is piece-wise linear, i.e., 
$$
I(R)=\left\{ \begin{array}{l}
     R\,,\,R\leq H(Y)  \\
    H(Y)\,,\,R>H(Y)
\end{array}
\right..
$$

Fig. \ref{fig:2} compares the theoretical IB curve with the curves generated by the neural MA-IB algorithm and the VIB method. 
It is obvious that the points obtained by the neural MA-IB algorithm almost coincides with the theoretical curve. 
In contrast, the VIB method only provides a coarse estimation of the desired IB curve. 
Note that it is reasonable for the points given by the neural MA-IB algorithm to cluster near the inflection point $(H(Y ),H(Y))$ \cite{kolchinsky2018caveats}, since the neural MA-IB algorithm fixes the Lagrange multiplier $\beta$.

\begin{figure}[H]
    \centering
    \includegraphics[width=0.8\linewidth]{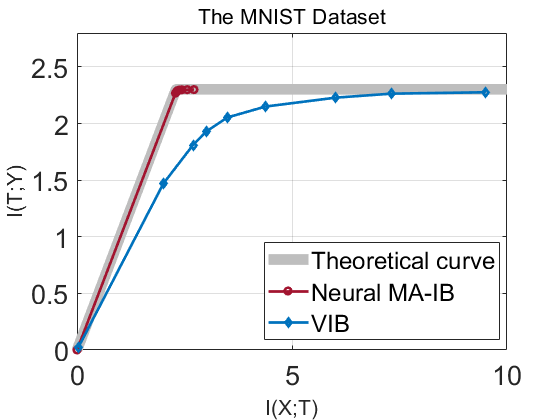}
    \caption{Comparison of the IB curves for the MNIST dataset, with VIB method (blue), neural MA-IB method (red) and the theoretical result (grey).}
    \label{fig:2}
\end{figure}

\section{Conclusion}
In this work, we propose a novel approach to solving IB problem. 
We reformulate the IB problem from the perspective of mapping, through which we transform the original problem into an equivalent simplified model, i.e., the MA-IB model. 
We then propose neural MA-IB algorithm that utilizes neural networks to solve the derived MA-IB model. 
Further, the asymptotic analysis of the proposed algorithm is conducted.
Experimental results show the effectiveness of the proposed neural MA-IB algorithm for both low-dimensional and high-dimensional IB problems.


\bibliographystyle{bibliography/IEEEtran}
\bibliography{bibliography/MAIB_REF}


\end{document}